\documentclass[acmtog]{acmart}
\renewcommand\footnotetextcopyrightpermission[1]{}

\setcopyright{none}

\acmDOI{}
\acmISBN{}
\acmConference[Conference]{}{}{}
\acmBooktitle{}

\AtBeginDocument{
  }

 \setcopyright{acmlicensed}

\acmSubmissionID{377}

\usepackage{color}
\definecolor{AAA}{rgb}{1.0, 0.13, 0.32}
\definecolor{BBB}{rgb}{0.2, 0.1, 1}
\definecolor{CCC}{rgb}{0.0, 1, 0}
\definecolor{DDD}{rgb}{0.9, 0, 0.4}

\usepackage{enumitem}
\setlist{nosep}

\usepackage{overpic}
\begin{document}

\title{P2M++: Enhanced Solver for Point-to-Mesh Distance Queries}

\author{Qinghao Guo}
\authornote{Both authors contributed equally to this research.}
\email{1300165109@qq.com}
\orcid{0009-0003-7289-3557}
\affiliation{
  \institution{Shandong University}
  \country{China}
}

\author{Pengfei Wang}
\authornotemark[1]
\email{pengfei1998@foxmail.com}
\orcid{0000-0002-2079-275X}
\affiliation{
  \institution{Shandong University}
  \country{China}
}

\author{Chen Zong}
\email{zongchen@nuaa.edu.cn}
\affiliation{
  \institution{Nanjing University of Aeronautics and Astronautics}
  \country{China}
}

\author{Maodong Pan}
\email{maodong@nuaa.edu.cn}
\affiliation{
  \institution{Nanjing University of Aeronautics and Astronautics}
  \country{China}
}

\author{Shiqing Xin}
\authornote{Corresponding author.}
\email{xinshiqing@sdu.edu.cn}
\affiliation{
  \institution{Shandong University}
  \country{China}
}

\author{Shuangmin Chen}
\email{csmqq@163.com}
\affiliation{
  \institution{Qingdao University of Science and Technology}
  \country{China}
}

\author{Changhe Tu}
\email{chtu@sdu.edu.cn}
\affiliation{
  \institution{Shandong University}
  \country{China}
}

\author{Wenping Wang}
\email{wenping@tamu.edu}
\affiliation{
  \institution{Texas A\&M University}
  \country{USA}
}

\renewcommand{\shortauthors}{Guo et al.}

\begin{abstract}
Point-to-mesh distance queries are fundamental in computer graphics and geometric modeling. While the state-of-the-art \textsc{P2M} method achieves high-speed queries via Voronoi-based localization, it suffers from prohibitive precomputation costs. Its iterative Voronoi sweep for interference detection leads to redundant predicate evaluations and scales poorly on rotationally symmetric structures (e.g., spheres, cones or cylinders), where candidate counts grow quadratically. 

We propose \textsc{P2M++} to address these limitations through three key contributions. First, we adaptively augment the set of mesh vertices with auxiliary sites in regions of high Voronoi vertex density to localize complex interference within minimal spatial regions. Second, we reformulate interference detection as a series of sphere-triangle collision tests centered at Voronoi cell corners, which are efficiently resolved using the base mesh's BVH. Finally, we enhance runtime performance by replacing the standard $kd$-tree search with a faster recursive dynamic programming implementation. Experimental results demonstrate that \textsc{P2M++} is $3\times$--$10\times$ faster than the original \textsc{P2M} during preprocessing and $1.5\times$ faster in queries, with even more pronounced gains on rotationally symmetric geometries.
\end{abstract}

\begin{CCSXML}
<ccs2012>
   <concept>
       <concept_id>10010147.10010371.10010396.10010397</concept_id>
       <concept_desc>Computing methodologies~Mesh models</concept_desc>
       <concept_significance>500</concept_significance>
       </concept>
   <concept>
       <concept_id>10003752.10003809.10003635</concept_id>
       <concept_desc>Theory of computation~Computational geometry</concept_desc>
       <concept_significance>500</concept_significance>
       </concept>
 </ccs2012>
\end{CCSXML}

\ccsdesc[500]{Computing methodologies~Mesh models}
\ccsdesc[500]{Theory of computation~Computational geometry}

\keywords{bounding volume hierarchy (BVH), geometry processing, point-to-mesh distance, spatial decomposition, Voronoi diagram}

\maketitle

\section{Introduction}

Querying the distance from a query point to a triangle mesh is a fundamental operation in computer graphics and geometric modeling~\cite{10.5120/16754-7073,10.2312/hpg.20191189}. This operation is essential for a wide range of applications, including collision detection in physical simulations~\cite{Teschner2005,10.1145/285857.285860,10.1145/566654.566623}, surface reconstruction from point clouds~\cite{Berger2017,10.5555/1281957.1281965,10.1145/142920.134011}, shape analysis using distance fields~\cite{Jones2006,10.1145/344779.344899}, and path planning in robotics~\cite{Quinlan1994,508439,770022,5152817}.

Despite the widespread use of Bounding Volume Hierarchies (BVH) for retrieving the closest point, the \textsc{P2M} algorithm~\cite{p2m:2023} has recently established a new state-of-the-art for query speed. By locating the mesh triangle whose Voronoi cell contains the query point, \textsc{P2M} effectively bypasses traditional tree-traversal bottlenecks. To avoid the topological complexity of computing true triangle-site Voronoi diagrams, \textsc{P2M} uses the mesh vertex set as proxy sites. It bridges the gap between the nearest vertex and the closest triangle by precomputing ``interference'' relationships. Specifically, if there exists a query point $q$ such that the nearest vertex site is $v$, but the true closest point on the mesh lies on a triangle $f$, then $v$ is said to be an \textit{interceptor} of $f$. Despite its superior query speed, the prohibitive precomputation costs of \textsc{P2M} significantly limit its practical utility.

In this paper, we propose \textsc{P2M++}, an improved framework based on a careful analysis of these inefficiencies. Suppose a mesh vertex $v$ dominates a Voronoi cell $\Omega_v$ within the Voronoi diagram of mesh vertices. \textsc{P2M} identifies interference by sweeping the Voronoi diagram starting from $v$, traversing adjacent cells layer-by-layer to exhaustively find all triangles that could potentially yield a shorter distance than $v$ for any query point $q \in \Omega_v$. This process suffers from two primary bottlenecks. First, the termination criterion is overly conservative: interference detection only stops when an entire layer of Voronoi cells yields no candidates, leading to redundant and costly geometric predicate evaluations. Second, for rotationally symmetric structures such as spheres, cones, or cylinders, the number of interference candidates grows quadratically. This causes both preprocessing and query performance to degrade severely in these common geometric cases.

To address these limitations, we introduce two core improvements. First, our framework adaptively augments the proxy site set with auxiliary points in regions where Voronoi vertices are highly clustered. Unlike the original method, these auxiliary sites are not restricted to the mesh surface, allowing us to localize complex interference within minimal spatial regions. Second, for a mesh vertex $v$ acting as a Voronoi site, we reduce the time-consuming interference detection problem to a series of sphere-triangle collision tests. By checking collisions between the base mesh's BVH and balls centered at the corners of $\Omega_v$---where each ball's radius equals the distance between $v$ and the corresponding corner---we can rapidly identify potential triangle candidates.

For auxiliary (non-vertex) Voronoi sites, we adopt a simplified strategy: we construct a sphere centered at the site with an estimated radius and identify all triangles colliding with this sphere. Importantly, the core strategies of \textsc{P2M++} can be extended to any convex decomposition of the ambient space, provided the decomposition supports an efficient point-location strategy. Finally, we enhance runtime performance by replacing the standard $kd$-tree search with a more efficient recursive dynamic programming implementation.

Our results demonstrate that for common organic models, \textsc{P2M++} is $3\times$--$10\times$ faster than the original \textsc{P2M} during preprocessing and $1.5\times$ faster during queries. These performance gains are even more pronounced for rotationally symmetric structures, where \textsc{P2M} often struggles. 

Our primary contributions are as follows:
\begin{enumerate}[leftmargin=*, nosep]
    \item We present \textsc{P2M++}, an enhanced distance query framework that significantly accelerates both the precomputation and runtime query stages of the current state-of-the-art \textsc{P2M}.
    \item We reformulate interference detection as a series of sphere-triangle collision tests centered at Voronoi cell corners, enabling efficient candidate pruning via the base mesh's BVH.
    \item We introduce an adaptive site augmentation strategy to localize complex interference within minimal spatial regions, specifically addressing bottlenecks caused by geometric symmetry.
    \item We provide a highly optimized implementation that demonstrates superior robustness and performance across diverse public datasets, consistently outperforming both \textsc{P2M} and existing competitors such as FCPW.
\end{enumerate}

\section{Related Work}

Point-to-mesh distance computation is a cornerstone of computer graphics, physical simulation~\cite{10.1016/j.jvcir.2007.01.005,10.1145/566654.566623,10.1145/882262.882358}, and computational geometry~\cite{Botsch2010PolygonMP,10.1109/TVCG.2005.49,10.1007/s10514-013-9327-2}. Existing acceleration algorithms can be broadly categorized into two paradigms: \textit{Hierarchical Spatial Structures} that organize mesh triangles into tree-based indices~\cite{10.1145/3596711.3596791,10.1145/142920.134011}, and \textit{Proxy-based Spatial Precomputation} methods that map query regions to candidate triangle subsets.

\subsection{Hierarchical Spatial Structures}
The most widely adopted strategy utilizes hierarchical structures to organize mesh triangles. These methods accelerate queries via a branch-and-bound traversal, which recursively prunes geometric regions that cannot contain the closest point by maintaining and refining distance bounds during the search.

Classic implementations, such as the Proximity Query Package (PQP)~\cite{Larsen1999FastPQ}, utilize Oriented Bounding Boxes (OBBs)~\cite{BAREQUET200191,10.1145/237170.237244} to tightly wrap geometry, minimizing intersection tests through compact enclosures. Similarly, CGAL~\cite{cgal:pt-t3-25b} employs Axis-Aligned Bounding Boxes (AABBs)~\cite{10.1080/10867651.1997.10487480,LARSSON2006450}, which trade tighter fitting for faster construction and simpler intersection tests, balancing preprocessing efficiency with query performance. In recent years, high-performance libraries have pushed this paradigm by exploiting modern hardware capabilities. Intel Embree~\cite{10.1145/2601097.2601199} leverages SIMD vectorization and wide BVH structures to enable simultaneous testing of multiple bounding volumes. FCPW~\cite{FCPW} further optimizes this approach through vectorized traversal and efficient packet-based query processing, achieving state-of-the-art performance on modern CPU architectures.

\subsection{Proxy-based Spatial Precomputation}
A distinct category of algorithms shifts the focus from hierarchical object organization to precomputing the correspondence between spatial query regions and candidate triangles. These methods identify a small, relevant subset of primitives by pre-calculating ``interception'' or ``influence'' lists.

The \textsc{P2M} (Point-to-Mesh) algorithm~\cite{p2m:2023} is a prominent example of this paradigm. It uses mesh vertices as proxies for the underlying primitives. For a query point $q$, a $kd-tree$ can rapidly retrieve the nearest mesh vertex $v$, implying that $q$ lies within the Voronoi cell $\Omega_v$ of the vertex set. \textsc{P2M} precomputes an \textit{interception table} for each vertex $v$, identifying all edges and faces that could potentially provide a distance shorter than $dist(q, v)$ for any $q \in \Omega_v$. In essence, the precomputation captures the discrepancy between the vertex-based spatial partition and the true primitive-based Voronoi diagram. During the query stage, once the nearest vertex is located, the algorithm only tests the small subset of primitives stored in the corresponding interception table.

Following a similar philosophy, Triangle Influence Supersets~\cite{10.1111:cgf.14861} employ octree-based~\cite{MEAGHER1982129} spatial subdivisions instead of Voronoi regions. For each octree node, a conservative list of candidate triangles is stored based on convex hull influence regions. These interception-based methods achieve significant query speedups over BVH approaches, often by several fold. However, they are characterized by considerably longer preprocessing times and higher memory overhead, as resolving the intrinsic discrepancy between the simplified spatial partition and the complex primitive-based Voronoi boundaries remains a significant challenge.

\section{Methodology}
\subsection{Overview}

Our framework, \textsc{P2M++}, builds upon the foundational concepts of \textsc{P2M}~\cite{p2m:2023}. Similar to its predecessor, \textsc{P2M++} utilizes a set of prescribed proxy points to partition the ambient space and maintains an \textit{interception table} to track true triangle candidates once the nearest proxy point is identified.

\begin{figure}[h]
  \centering
  \begin{overpic}[width=\linewidth]{img/fig1.pdf}
    \put(21, -4){\textbf{(a)}} 
    \put(75, -4){\textbf{(b)}}
  \end{overpic}
  \vspace{0.05cm}
  \caption{Overview of the \textsc{P2M++} framework. (a) Augmented Voronoi partitioning combining mesh vertices and auxiliary sites. (b) The interception table explicitly links each Voronoi cell to intercepted geometric primitives (edges in 2D), strictly constraining the search space.}
  \label{fig:Overview:Spatial:partition}
\end{figure}

\textsc{P2M++} introduces three fundamental advancements over the original method. First, while \textsc{P2M} relies solely on the mesh vertex set $V$ to construct the Voronoi diagram $\mathcal{V}_{\text{vertices}}$, \textsc{P2M++} adaptively augments this set with auxiliary sites in regions where Voronoi vertices are highly clustered. This results in an augmented Voronoi diagram $\mathcal{V}_{\text{augmented}}$ that provides a more balanced spatial partition, as illustrated in Fig.~\ref{fig:Overview:Spatial:partition}(a).

Second, the interception table construction is reformulated. \textsc{P2M} identifies interceptors by sweeping $\mathcal{V}_{\text{vertices}}$ for every triangle and edge in the mesh, a process that incurs significant redundant computational overhead. In contrast, \textsc{P2M++} traverses each Voronoi cell in $\mathcal{V}_{\text{augmented}}$ corresponding to a site $v$ and directly retrieves the candidate triangles that identify $v$ as their interceptor. These candidates are defined as triangles that potentially contain the true closest point for at least one query point $q$ within the Voronoi cell of $v$ (see Fig.~\ref{fig:Overview:Spatial:partition}(b)).

Finally, we optimize the query stage by replacing the traditional $kd$-tree search with a recursive Nearest Neighbor Search (NNS) algorithm~\cite{11165079}. Notably, recursive NNS is particularly effective when data points lie on a 2-manifold surface, providing superior query performance compared to standard tree-based approaches.

\begin{figure}[ht]
  \centering
  \hspace{-0.3cm}
  \includegraphics[width=0.5\linewidth]{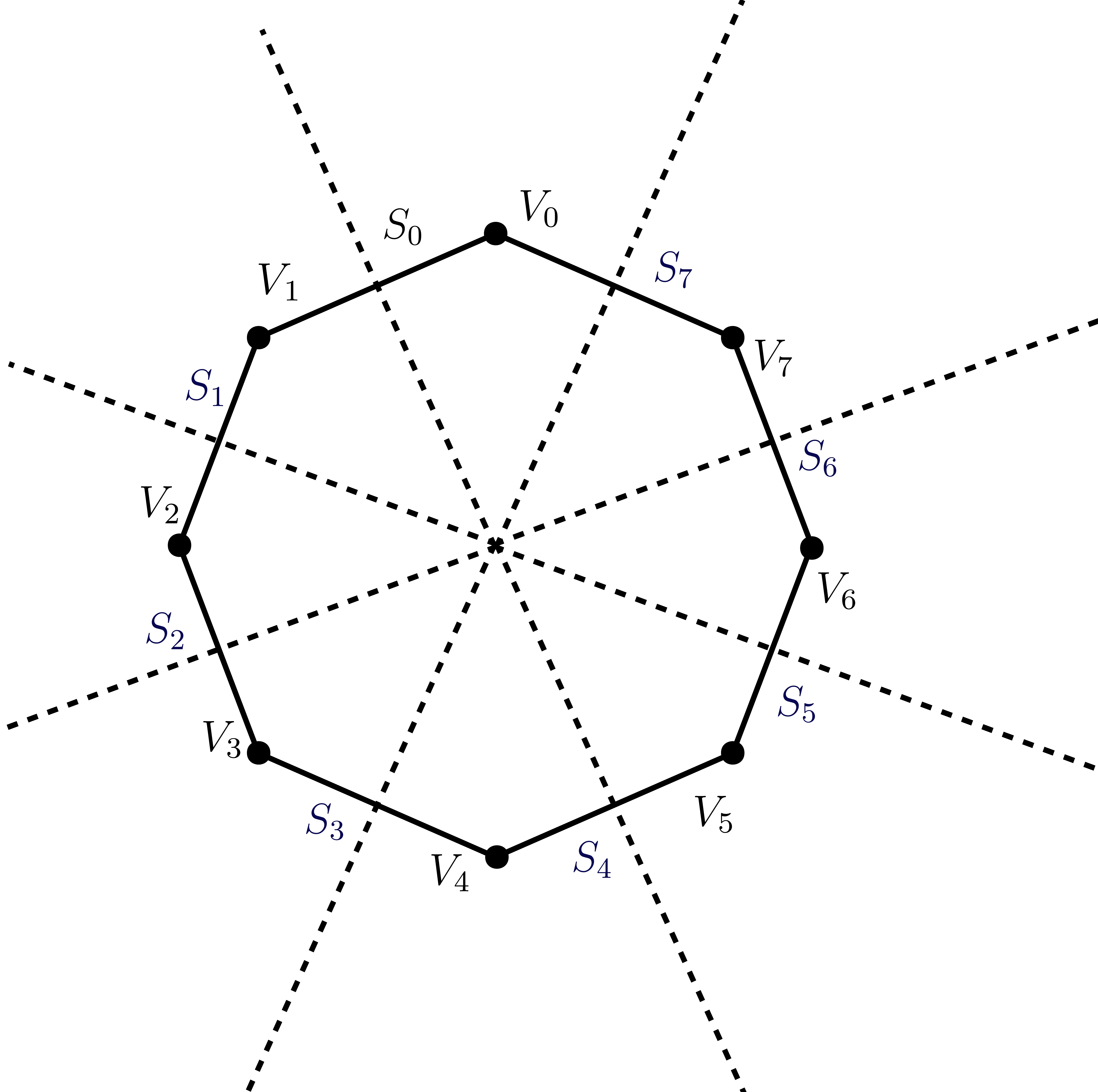}
  \caption{Motivation for site augmentation: In cases of high geometric symmetry, such as a regular $n$-gon approximating a circle, Voronoi vertices cluster at the center, leading to a quadratic explosion in interference relationships.}
  \label{fig:Necessity}
\end{figure}
\subsection{Adaptive Proxy Point Placement}

\paragraph{Motivation for Site Augmentation} 
Consider the approximation of a circle using a regular $n$-gon, as illustrated in Fig.~\ref{fig:Necessity}. As the resolution $n$ increases, the Voronoi vertices concentrate densely at the center of the circle. For a query point $q$ located at this center, any vertex of the polygon may be retrieved as the nearest site, whereas any polygon edge could simultaneously provide the true shortest distance. Consequently, an interference relationship is established between every vertex-edge pair, causing the size of the \textsc{P2M} interception table to grow quadratically, $O(n^2)$, with the resolution $n$. This complexity makes the precomputation and storage requirements for rotationally symmetric or near-symmetric geometries prohibitive, necessitating the introduction of auxiliary sites to localize these high-interference regions and restore computational efficiency.

\paragraph{Ring-based Site Augmentation}
The simplest approach to isolate high-density Voronoi vertex clusters would be to enclose them within a sphere or a convex polytope. However, such a strategy would necessitate a specialized point-location procedure during the query stage. Instead, we exploit a fundamental advantage of the Voronoi diagram: locating a query point within a specific cell is equivalent to a highly efficient Nearest Neighbor Search (NNS). By utilizing additional Voronoi sites to partition the space, we ensure that the query logic remains streamlined and consistent.

We introduce a scale parameter $\tau$ and conceptually partition the ambient space into uniform cells. By traversing the precomputed Voronoi vertices, we identify cells that are densely populated. For each such cell, we insert a central proxy point $p_c$ along with a surrounding set of auxiliary sites $\{p_c^{(i)}\}$ arranged on an enclosing sphere, so that the resulting Voronoi cell of $p_c$ isolates the clustered vertices from the rest of the diagram. When neighboring cells are jointly flagged as dense, overlapping auxiliary sites are removed to keep the partition consistent.

\paragraph{Choice of Parameters}
The cell size, density threshold, and number of auxiliary sites are chosen empirically based on the input mesh. In particular, $\tau$ is set proportional to the local sampling density, and we found the method to be robust within a reasonable range of these settings.

\subsection{Interception Table Construction}

\paragraph{Relaxation of Interference Relationships}
Let $v$ be an arbitrary reference point on the base mesh $\mathcal{M}$ (typically a proxy site), and let $q \in \mathbb{R}^3$ be a query point. A fundamental observation is that any triangle $f \in \mathcal{M}$ capable of providing a distance shorter than the distance to the reference point, $\|q - v\|$, must intersect the open ball $B(q, \|q - v\|)$ centered at $q$. 

When the query point $q$ varies within a convex polyhedral region $\mathcal{P}$, we seek to identify the \textit{compact interception table}, defined as the set of triangles that are truly closest to at least one point in $\mathcal{P}$:
\begin{equation}
\mathcal{L}_{\text{compact}}(v; \mathcal{P}) = \{ f \in \mathcal{M} \mid \exists q \in \mathcal{P}, \text{ s.t. } f = \arg\min_{f' \in \mathcal{M}} d(q, f') \}.
\end{equation}

Computing $\mathcal{L}_{\text{compact}}(v; \mathcal{P})$ directly is computationally challenging because it requires determining if $\mathcal{P}$ intersects the complex bisector surfaces between different triangles of the mesh $\mathcal{M}$.
To maintain efficiency while ensuring completeness, we define the \textit{relaxed interception table} as the set of triangles that could potentially provide a shorter distance than $v$ for at least one $q \in \mathcal{P}$:
\begin{equation}
\mathcal{L}_{\text{relaxed}}(v; \mathcal{P}) = \{ f \in \mathcal{M} \mid \exists q \in \mathcal{P}, \text{ s.t. } d(q, f) < d(q, v) \}.
\end{equation}

For an on-surface proxy point $v \in \mathcal{M}$, the set $\mathcal{L}_{\text{relaxed}}$ naturally contains the triangle(s) incident to $v$. While the region $\mathcal{P}$ contains infinitely many points, rendering an exhaustive search impossible, the following theorem demonstrates that this problem can be reduced to a finite set of discrete tests.

\begin{figure}[htbp]
 \centering
 \includegraphics[width=0.8\linewidth]{img/Group6.pdf}
 \caption{Illustration of the Vertex Sphere Union property. For a convex polygon $\mathcal{P}$ with vertices $\{u_i\}$ and a reference point $v$ on the mesh surface, the union of spheres centered at the vertices, $\bigcup_{u \in \text{vert}(\mathcal{P})} B(u, \|u - v\|)$, completely encloses any sphere $B(x, \|x - v\|)$ centered at any arbitrary point $x$ within $\mathcal{P}$.}
 \label{fig:ver_ball_edge_ball}
\end{figure}

\paragraph{Vertex Sphere Union Theorem}
As illustrated in Fig.~\ref{fig:ver_ball_edge_ball}, identifying $\mathcal{L}_{\text{relaxed}}$ is equivalent to evaluating the union of spheres centered only at the vertices (corners) of $\mathcal{P}$. 

\begin{theorem}[Vertex Sphere Union]
Let $\text{vert}(\mathcal{P})$ denote the set of vertices of a convex polyhedron $\mathcal{P}$. For any reference point $v \in \mathcal{M}$, the continuous union of balls centered at all points $x \in \mathcal{P}$ is contained within the discrete union of balls centered exclusively at its vertices:
\begin{equation}
\bigcup_{x \in \mathcal{P}} B(x, \|x - v\|) = \bigcup_{u \in \text{vert}(\mathcal{P})} B(u, \|u - v\|).
\end{equation}
\end{theorem}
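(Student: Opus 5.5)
Set $f_y(x) = \|x-y\|^2 - \|x-v\|^2$ for a fixed point $y\in\mathbb{R}^3$. Then $y \in B(x,\|x-v\|)$ exactly when $f_y(x) < 0$, using the open ball as in the relaxation paragraph. The key observation is that the quadratic terms cancel:
\[
f_y(x) = \|y\|^2 - \|v\|^2 - 2\langle x,\, y - v\rangle .
\]
So $f_y$ is an \emph{affine} function of the center $x$. The whole argument rests on this linearity, which turns a continuum of balls into finitely many. No earlier result of the paper is needed beyond the definition of $B$.

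For the inclusion ``$\subseteq$'', take $y$ in the left-hand union, so $f_y(x)<0$ for some $x\in\mathcal{P}$. Since $\mathcal{P}$ is a convex polyhedron, Minkowski's theorem lets me write $x=\sum_i \lambda_i u_i$ with $u_i\in\text{vert}(\mathcal{P})$, $\lambda_i\ge 0$ and $\sum_i\lambda_i=1$. Because $f_y$ is affine, $f_y(x)=\sum_i\lambda_i f_y(u_i)$. A convex combination of numbers that are all $\ge 0$ is $\ge 0$, so some vertex satisfies $f_y(u_i)<0$, i.e.\ $y\in B(u_i,\|u_i-v\|)$. The reverse inclusion ``$\supseteq$'' is immediate, since every vertex lies in $\mathcal{P}$.

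The hypothesis $v\in\mathcal{M}$ plays no role; the argument holds for any reference point $v$. The same proof also works for closed balls, with $\le$ in place of $<$.

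The one genuine obstacle is boundedness. The vertex representation $x=\sum_i\lambda_i u_i$ requires $\mathcal{P}$ to be a polytope, i.e.\ bounded. Voronoi cells on the convex hull of the sites are unbounded, and then the identity can fail: along an unbounded ray of $\mathcal{P}$ the balls grow to half-spaces that are not covered by balls at finitely many corners.

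I would handle this in one of two ways. The first option is to state the theorem for bounded $\mathcal{P}$, which the paper can ensure by clipping the diagram with a bounding box of the query domain, so that the clipped cells are polytopes. The second option is to write $\mathcal{P}=\mathrm{conv}(\text{vert})+\mathrm{cone}(\text{rays})$ and add, for each extreme ray $r$, the half-space $\{y : \langle y-v, r\rangle>0\}$ to the right-hand side, using the same affine argument. I would present the bounded case as the main proof and add a remark covering the unbounded case.
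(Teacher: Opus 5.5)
Your proof is correct and is essentially the paper's argument: the paper also uses that $\|p-y\|^2-\|y-v\|^2$ is affine in the center $y$, writes $x$ as a convex combination of the vertices of $\mathcal{P}$, and concludes that some vertex ball already contains $p$. The paper works with closed balls ($\le$), which you note changes nothing. Your remark that this needs $\mathcal{P}$ to be bounded is correct and is a point the paper passes over: unbounded Voronoi cells of sites on the convex hull are not convex hulls of their vertices, so the identity fails for them, and clipping to the bounded query domain is genuinely needed.
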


\begin{proof} 
Let $p$ be an arbitrary point in the union $\bigcup_{x \in \mathcal{P}} B(x, \|x - v\|)$. By definition, there exists a point $x \in \mathcal{P}$ such that $p \in B(x, \|x - v\|)$, which implies:
\begin{equation} 
\|p - x\|^2 \leq \|x - v\|^2 \iff \|p - x\|^2 - \|x - v\|^2 \leq 0.
\end{equation}
Define the function $f: \mathbb{R}^3 \to \mathbb{R}$ as $f(y) = \|p - y\|^2 - \|y - v\|^2$. Expanding the terms: 
\begin{align} 
f(y) &= (\|p\|^2 - 2p \cdot y + \|y\|^2) - (\|y\|^2 - 2v \cdot y + \|v\|^2) \\ 
     &= 2y \cdot (v - p) + (\|p\|^2 - \|v\|^2).
\end{align} 
Thus, $f(y)$ is an affine function with respect to $y$. Since $x$ lies within the convex polyhedron $\mathcal{P}$, it can be expressed as a convex combination of its vertices: $x = \sum_i \lambda_i u_i$ with $\lambda_i \geq 0$ and $\sum_i \lambda_i = 1$. By the linearity of $f$: 
\begin{equation} 
f(x) = f\left(\sum_i \lambda_i u_i\right) = \sum_i \lambda_i f(u_i).
\end{equation}
Since $p \in B(x, \|x - v\|)$, we have $f(x) \leq 0$. A convex combination of values can be non-positive only if at least one of the values is non-positive. Therefore, there exists some vertex $u_k$ such that $f(u_k) \leq 0$, which implies $\|p - u_k\| \leq \|u_k - v\|$, i.e., $p \in B(u_k, \|u_k - v\|)$. This proves $\bigcup_{x \in \mathcal{P}} B(x, \|x - v\|) \subseteq \bigcup_{u \in \text{vert}(\mathcal{P})} B(u, \|u - v\|)$. The reverse inclusion is trivial.
\end{proof}

\paragraph{Implementation Details}
When the reference site $v$ is an on-surface proxy point, we reduce the interference detection to identifying triangles intersected by the union of vertex-centered spheres $$\bigcup_{u \in \text{vert}(\mathcal{P})} B(u, \|u - v\|),$$where $\mathcal{P}$ is the Voronoi cell of $v$. This sphere-mesh intersection problem is efficiently resolved using the Bounding Volume Hierarchy (BVH) of the base mesh $\mathcal{M}$.

For off-surface proxy points added during site augmentation, since the site $s$ does not lie on the mesh surface, we choose the reference point as the projection $\operatorname{proj}(s, \mathcal{M})$ of $s$ onto the mesh. However, their Voronoi cells often have many vertices, making individual sphere construction per vertex computationally expensive. To reduce preprocessing cost, we construct a single conservative sphere centered at the site $s$ with radius:
\begin{equation}
r = \max_{u \in \text{vert}(\mathcal{P})} \left( \|s - u\| + \|u - \operatorname{proj}(s, \mathcal{M})\| \right)
\end{equation}
By triangle inequality, this sphere contains the union of all vertex-centered spheres constructed with respect to the reference point $\operatorname{proj}(s, \mathcal{M})$, ensuring correctness. This reduces the number of sphere-mesh intersection tests from $O(|\text{vert}(\mathcal{P})|)$ to $O(1)$ per site.

\subsection{Distance Query Stage}

Given a query point $q \in \mathbb{R}^3$, the distance computation proceeds in two primary phases. First, we identify the nearest proxy point $v$ via a Nearest Neighbor Search (NNS) and retrieve its corresponding precomputed interception table $\mathcal{L}_{\mathcal{P}}$. We then evaluate the distance from $q$ to each triangle in the table, returning the minimum value. To further enhance query performance, we introduce the following two strategies.

\paragraph{Recursive Nearest Neighbor Search}
While our framework incorporates off-surface auxiliary proxy points, the vast majority of the proxy sites lie on the 2-manifold surface $\mathcal{M}$. The recursive NNS algorithm~\cite{11165079} has been demonstrated to offer superior query performance when data points are distributed on a manifold. 
Therefore, we replace the traditional $kd$-tree search with this recursive NNS algorithm to better exploit the local geometric coherence of the proxy sites.

\paragraph{Bounding Sphere Pruning}
We precompute a bounding sphere $B(c, r)$ for each triangle in the mesh $\mathcal{M}$. During the traversal of candidate triangles associated with site $v$, we apply a pruning test: the exact point-to-triangle distance is computed only if the lower bound of the distance, $\|q - c\| - r$, is smaller than the current minimum distance $d_{\min}$. 
Otherwise, the triangle is discarded. This filtering technique effectively skips redundant geometric calculations with a minimal computational overhead for each test.

\begin{figure}[!h]
  \centering
  \begin{overpic}[width=\linewidth]{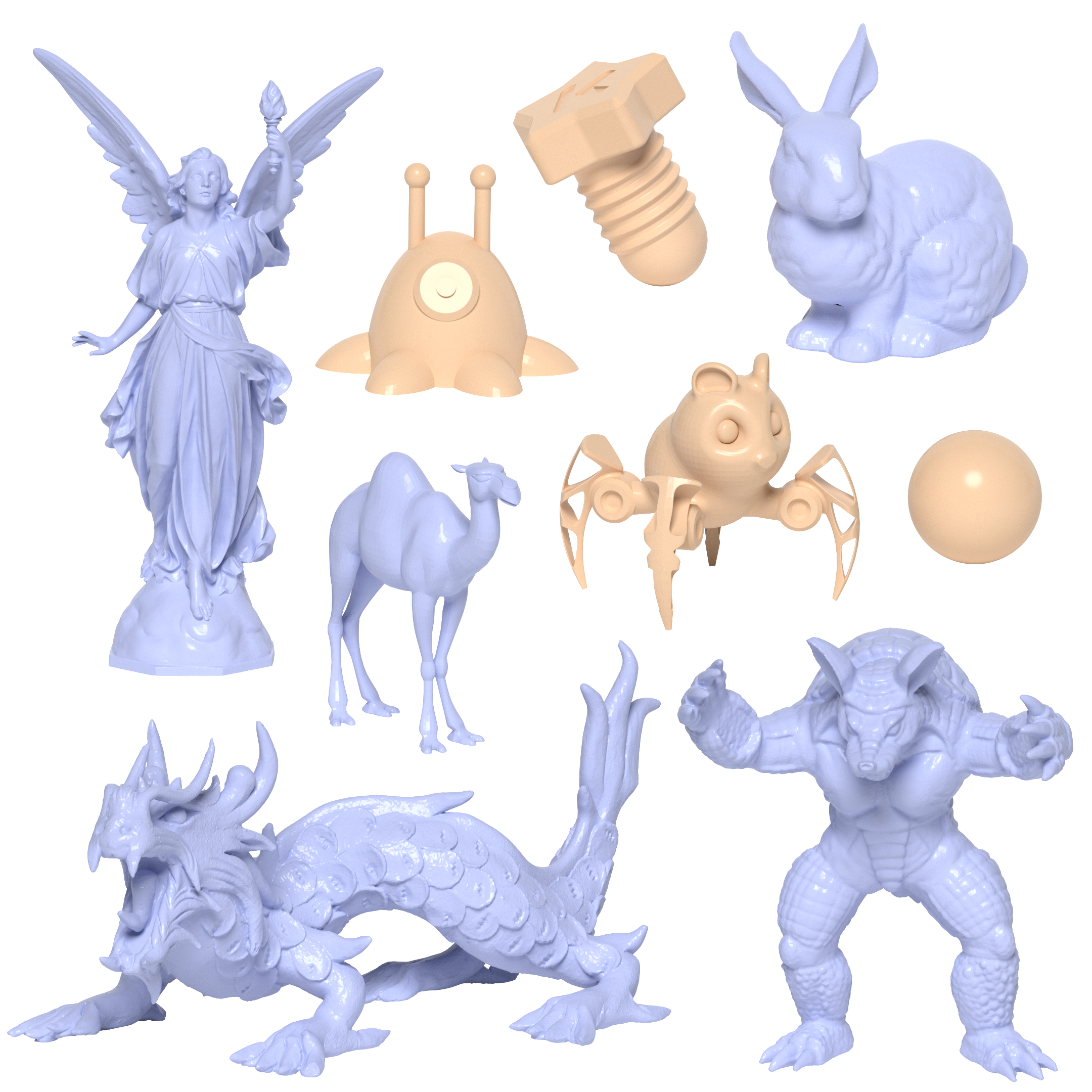}
    \put(57.5,70){\color{red}\textbf{\#1}}
    \put(67.5,45.5){\color{red}\textbf{\#2}}
    \put(39,60.6){\color{red}\textbf{\#3}}
    \put(82.5,89){\color{blue}\textbf{Bunny}}
    \put(-0.3,32.4){\color{blue}\textbf{Dragon}}
    \put(73.5,-1.5){\color{blue}\textbf{Armadillo}}
    \put(3,76){\color{blue}\textbf{Lucy}}
    \put(43,41){\color{blue}\textbf{Camel}}
    \put(88,45){\color{red}\textbf{Sphere}}
  \end{overpic}
  \caption{Models used for quantitative comparison in the tables: classic free-form geometries (blue) and rotationally symmetric models (yellow).}
  \label{fig:model}
  \vspace{-2mm}
\end{figure}

\section{Evaluation}

We implement our method in C++ and conduct all experiments on a Mac mini equipped with an Apple M4 CPU and 16GB RAM. Our implementation leverages CGAL~\cite{10.1145/1653771.1653865} for constructing the Delaunay triangulation, Embree~\cite{10.1145/2601097.2601199} for accelerating sphere-triangle intersection tests during interception table construction, and DPNN~\cite{11165079} library for nearest neighbor queries during the query stage.

We compare our method against two state-of-the-art approaches: the original \textsc{P2M} algorithm~\cite{p2m:2023} and the FCPW library~\cite{FCPW} with vectorization enabled. Following the experimental setup in \textsc{P2M}, we generate one million query points uniformly distributed within a region 10$\times$ the axis-aligned bounding box of each test model. We evaluate our method on models from the ABC dataset~\cite{Koch_2019_CVPR}, which provides a diverse collection of CAD geometries with varying complexity, as well as several classic benchmark models commonly used in the graphics community. To provide detailed quantitative comparisons, we select representative models spanning different geometric characteristics—including classic free-form geometries and rotationally symmetric structures—as visualized in Fig.~\ref{fig:model}. For clarity, we refer to the classic models by their original names and denote the rotationally symmetric models as \#1, \#2, and \#3.

\begin{table}[]
\caption{Statistics of the interception tables across various models. Max and Avg denote the maximum and arithmetic mean of the interception list length per site, respectively. Queried denotes the average number of triangle-distance comparisons performed per query.}
\label{table:inter}
\resizebox{0.99\columnwidth}{!}{%
\begin{tabular}{cccccc|cccc}
\toprule
                             & Armadillo & Bunny  & Camel  & Dragon & Lucy   & \#1 & \#2 & \#3 & Sphere \\ \hline
\multicolumn{1}{c|}{faces}   & 99976     & 69451  & 19510  & 249882 & 525814 & 48434    & 142562  & 15902   & 20480  \\ \hline
\multicolumn{1}{c|}{Max}     & 371       & 609    & 216    & 598    & 799    & 43186    & 2485    & 15902   & 20480  \\
\multicolumn{1}{c|}{Avg}     & 63.639    & 62.732 & 66.921 & 62.868 & 63.930 & 225.86   & 151.59  & 243.91  & 229.15 \\
\multicolumn{1}{c|}{Queried} & 72.314    & 59.601 & 57.383 & 71.850 & 99.710 & 49.36    & 104.81  & 46.19   & 25.51  \\
\bottomrule
\end{tabular}
}
\end{table}

\subsection{Interception List Analysis}

\begin{figure}[htbp]
  \centering
  \includegraphics[width=\linewidth]{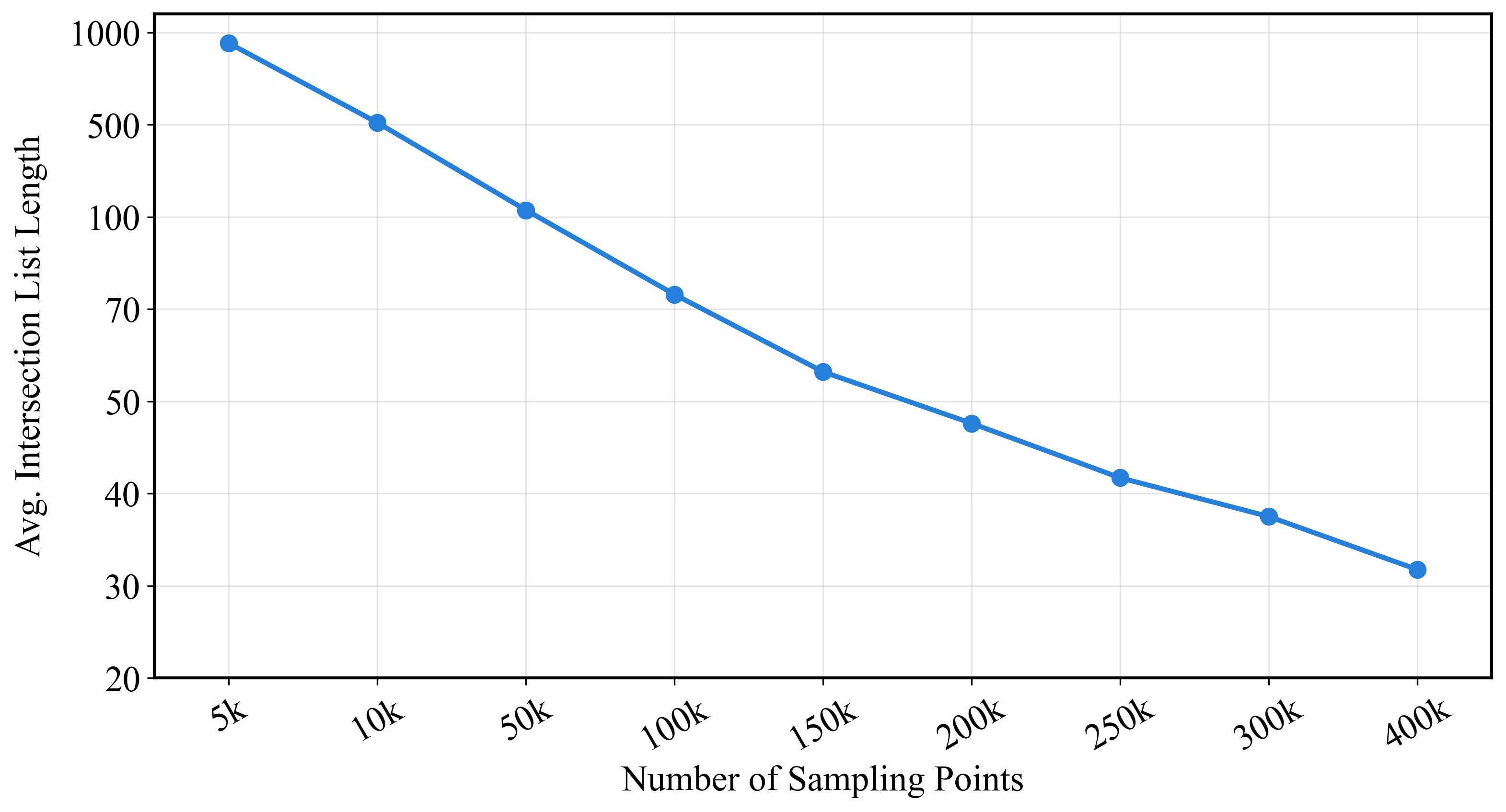}
  \caption{Interception list length w.r.t the number of sampling points on the Dragon model.}
  \label{fig:Intersection_list_length}
  \vspace{-2mm}
\end{figure}

In our method, each Voronoi site maintains an interception table, and the length of this table directly impacts query performance. We analyze three metrics: the average interception table length per site, the maximum interception table length, and the average number of triangle-distance comparisons per query over one million queries. The third metric—which we term the query-weighted average—more accurately reflects actual query performance than the per-site average. This is because Voronoi cells have varying volumes: a site with a long interception table but a small cell volume contributes minimally to overall query cost, as query points rarely fall within that region.

Table~\ref{table:inter} presents these statistics for several models. For conventional geometries (e.g., Bunny, Dragon), the maximum interception table length remains significantly smaller than the total number of faces, indicating that no single cell requires checking the majority of the mesh. Moreover, the per-site average closely matches the query-weighted average, demonstrating balanced spatial distribution.

In contrast, rotationally symmetric models (e.g., Sphere) exhibit a stark discrepancy between maximum and average lengths. The maximum length approaches O(n), reflecting the presence of geometrically unstable regions near symmetry centers where slight perturbations drastically change the nearest triangle. Naively, this would cause the per-site average to be prohibitively high. However, our adaptive site insertion strategy confines these unstable regions to small, localized cells. As a result, despite high maximum lengths, the query-weighted average remains comparable to conventional models, as query points rarely encounter these problematic regions. This demonstrates the effectiveness of our adaptive refinement in mitigating the pathological behavior of symmetric geometries.

Furthermore, taking the classic Dragon model (250k faces) as an example, Fig.~\ref{fig:Intersection_list_length} illustrates how the per-site average interception table length varies as we increase the number of sampling sites. As the site count grows, the average table length exhibits a steady decline. This demonstrates that denser spatial partitioning effectively reduces the search scope within each cell. For well-behaved geometries, increasing the number of sites through adaptive refinement uniformly improves the compactness of interception tables, leading to more efficient queries.

\begin{table}[]
\centering
\caption{
Query time comparison (in $\mu s$) for FCPW, \textsc{P2M}, and our method across different models.
The breakdown of our method into Cell Locate and List Search stages is also provided. The \textbf{\underline{best}} and \underline{second-best} results are highlighted.}
\vspace{-2mm}
\label{table:query_classical_models}
\resizebox{0.99\columnwidth}{!}{%
\begin{tabular}{cccccc|cccc}
\toprule
 & Armadillo & Bunny & Camel & Dragon & Lucy & \#1 & \#2 & \#3 & Sphere \\ \hline
\multicolumn{1}{c|}{faces} & 99976 & 69451 & 19510 & 249882 & 525814 & 48434 & 142562 & 15902 & 20480 \\ \hline
\multicolumn{1}{c|}{FCPW} & 1.842 & \underline{1.074} & 1.051 & 1.162 & 1.450 & \underline{1.338} & 3.386 & 1.546 & \underline{3.286} \\
\multicolumn{1}{c|}{\textsc{P2M}} & \underline{1.099} & 1.345 & \textbf{\underline{0.474}} & \underline{0.755} & \underline{0.861} & 1.394 & \underline{1.424} & \underline{0.703} & 39.122 \\
\multicolumn{1}{c|}{Ours} & \textbf{\underline{0.885}} & \textbf{\underline{0.417}} & \underline{0.534} & \textbf{\underline{0.671}} & \textbf{\underline{0.803}} & \textbf{\underline{0.393}} & \textbf{\underline{0.721}} & \textbf{\underline{0.418}} & \textbf{\underline{0.482}} \\ \hline
\multicolumn{1}{c|}{Cell Locate} & 0.313 & 0.167 & 0.207 & 0.345 & 0.353 & 0.218 & 0.253 & 0.171 & 0.241 \\
\multicolumn{1}{c|}{List Search} & 0.553 & 0.236 & 0.305 & 0.314 & 0.441 & 0.167 & 0.451 & 0.222 & 0.221 \\ \bottomrule
\end{tabular}
}
\vspace{-2mm}
\end{table}

\subsection{Query Performance}
Our query process consists of two main steps: locating the nearest Voronoi site and traversing the corresponding interception table to find the closest triangle. Table~\ref{table:query_classical_models} compares query performance across several standard models and provides the time breakdown for each stage of our method. Our query time achieves a 1–2$\times$ speedup over \textsc{P2M} and is significantly faster than FCPW (1–5$\times$). Notably, \textsc{P2M} exhibits severe performance degradation on models with globally high rotational symmetry (e.g., the Sphere) due to the breakdown of its interception table strategy. In such cases, our method is up to 50$\times$ faster thanks to our adaptive site insertion strategy, which effectively prevents pathological behavior and also secures a 2$\times$+ speedup on geometries with local rotational symmetry. In terms of time breakdown, nearest site lookup accounts for approximately 40\% of query time on average, while interception table traversal constitutes the remaining 60\%.

Fig.~\ref{fig:query_time_compare_on_ABC_dataset} demonstrates query performance across the ABC dataset, showing consistent performance advantages of our method across diverse geometries.

\begin{figure}[htbp]
  \centering
  \includegraphics[width=\linewidth]{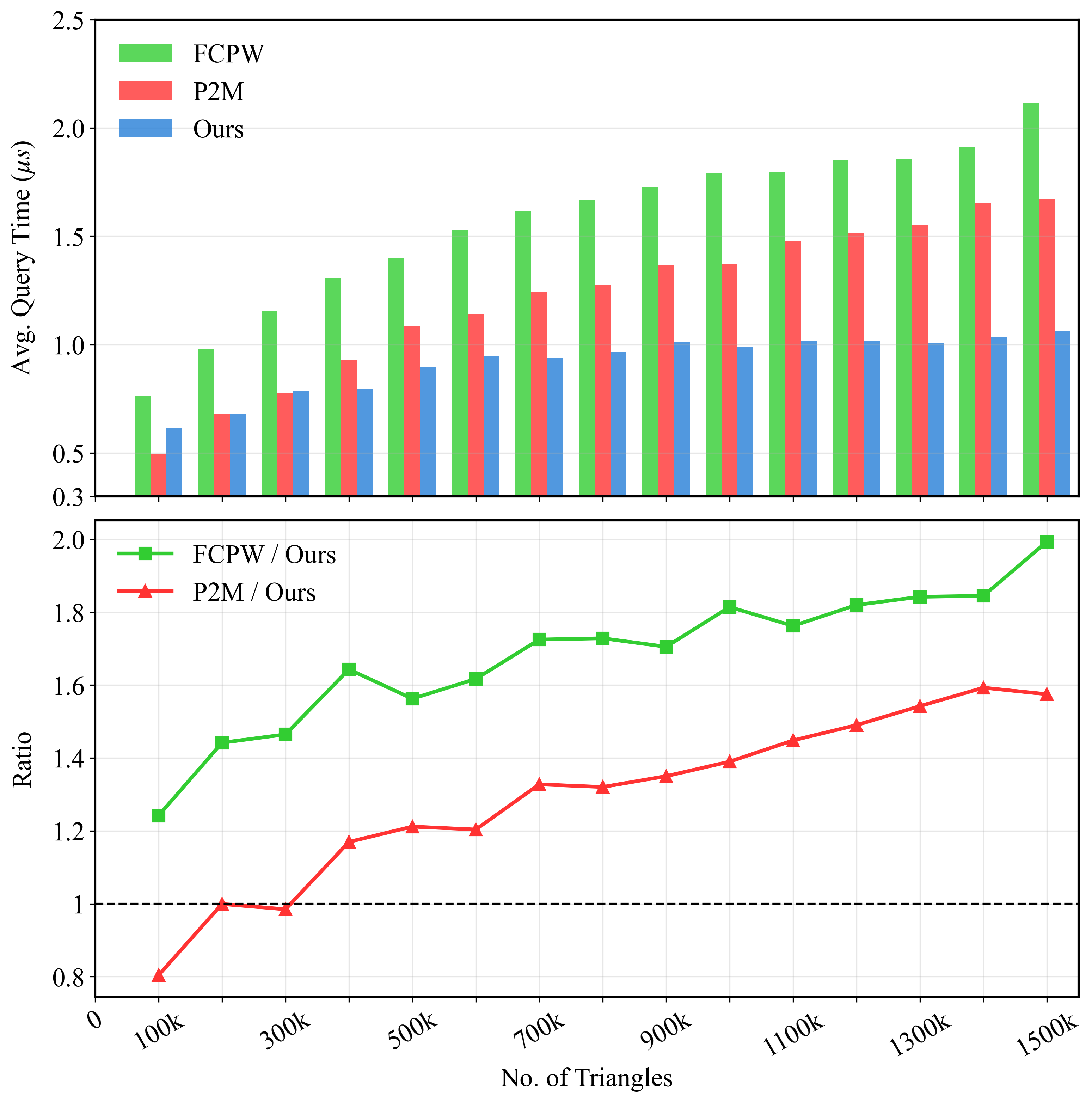}
  \caption{Comparison about query performance on the Dragon model with varying resolutions. Top: the average timing cost per query for P2M, FCPW and ours. Bottom: the comparison about the query cost among P2M, FCPW and ours.}
  \label{fig:query_time_compare}
\end{figure}

Fig.~\ref{fig:query_time_compare} examines scaling behavior on the Dragon model as triangle count increases, illustrating the comparative query performance of different methods across multiple resolution scales.

\begin{figure}[!h]
  \centering
  \includegraphics[width=\linewidth]{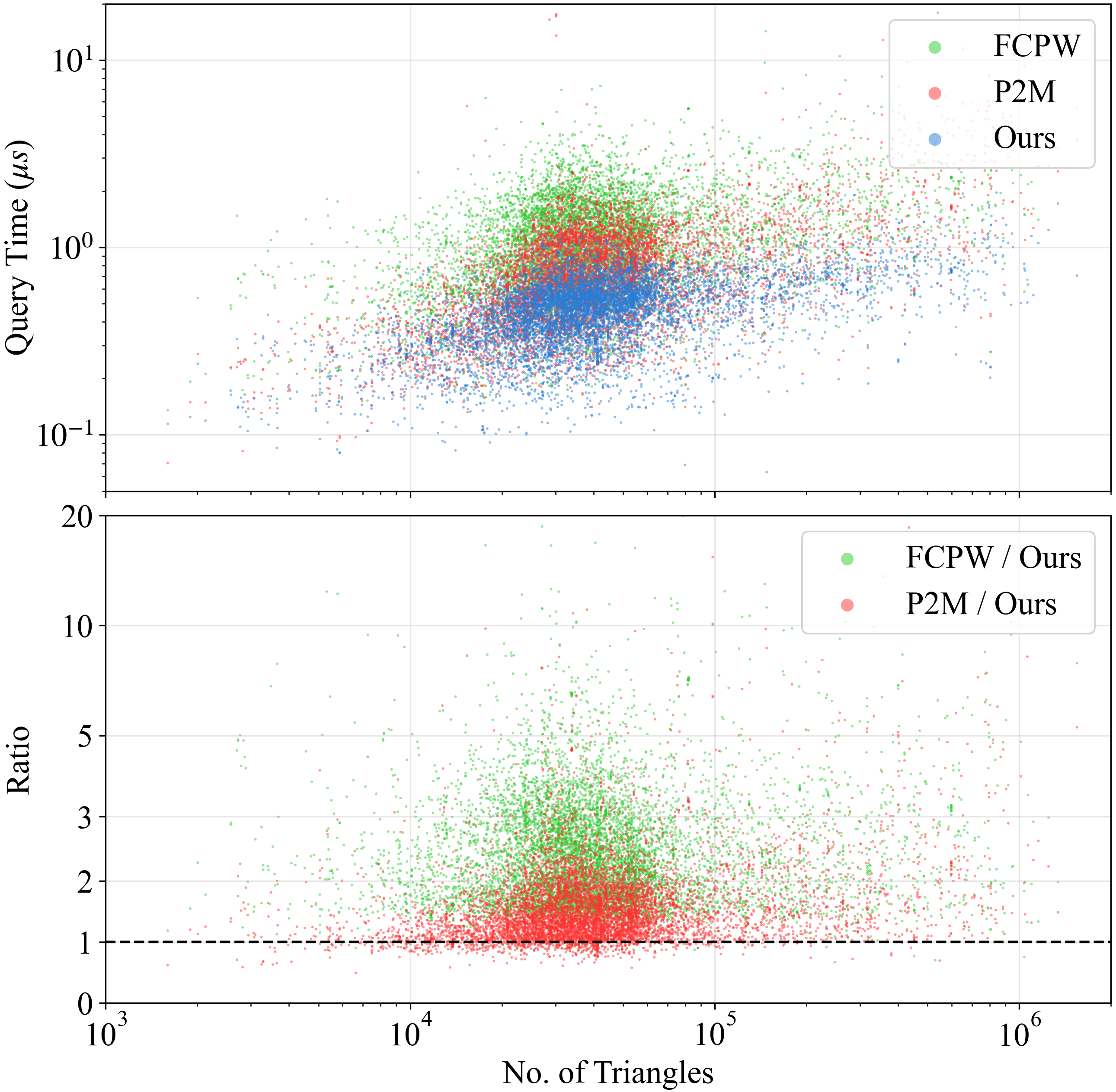}
  \caption{Query time comparison for \textsc{P2M}, FCPW, and our method on the ABC dataset.}
  \label{fig:query_time_compare_on_ABC_dataset}
\end{figure}

\begin{table}[]
\caption{Preprocessing time (in seconds) on classic and rotationally symmetric models. The table shows total time for FCPW, \textsc{P2M}, and our method, along with the breakdown of our three preprocessing stages.}
\vspace{-2mm}
\label{table:PrecessingTimeOnClassicalModels}
\resizebox{0.99\columnwidth}{!}{%
\begin{tabular}{cccccc|cccc}
\toprule
 & Armadillo & Bunny & Camel & Dragon & Lucy & \#1 & \#2 & \#3 & Sphere \\ \hline
\multicolumn{1}{c|}{faces} & 99976 & 69451 & 19510 & 249882 & 525814 & 48434 & 142562 & 15902 & 20480 \\ \hline
\multicolumn{1}{c|}{FCPW} & 0.026 & 0.017 & 0.005 & 0.068 & 0.152 & 0.012 & 0.040 & 0.004 & 0.005 \\
\multicolumn{1}{c|}{\textsc{P2M}} & 5.007 & 3.300 & 0.863 & 12.354 & 27.346 & 23.568 & 14.609 & 4.234 & 420.533 \\
\multicolumn{1}{c|}{Ours} & 0.694 & 0.534 & 0.137 & 1.739 & 3.895 & 0.795 & 1.548 & 0.352 & 0.791 \\ \hline
\multicolumn{1}{c|}{Voronoi Build} & 0.305 & 0.215 & 0.061 & 0.774 & 1.682 & 0.243 & 0.508 & 0.070 & 0.073 \\
\multicolumn{1}{c|}{Optimize} & 0.075 & 0.062 & 0.020 & 0.164 & 0.380 & 0.226 & 0.140 & 0.124 & 0.162 \\
\multicolumn{1}{c|}{List Build} & 0.313 & 0.256 & 0.056 & 0.801 & 1.831 & 0.325 & 0.899 & 0.157 & 0.556 \\ \bottomrule
\end{tabular}
}
\vspace{-2mm}
\end{table}

\subsection{Preprocessing Time}

Our preprocessing involves three main stages: dpnn and Delaunay triangulation construction (performed together as both rely on the Delaunay triangulation), adaptive site insertion, and interception table construction.

Table~\ref{table:PrecessingTimeOnClassicalModels} compares the total preprocessing time of our method with \textsc{P2M} and FCPW on several models, along with the time breakdown for each stage of our method. Both our method and \textsc{P2M} are slower than FCPW in preprocessing, as FCPW only requires BVH construction. However, we achieve significant improvements over \textsc{P2M} while maintaining comparable query performance. On conventional geometries, our method achieves {3–10$\times$} speedup in preprocessing. The advantage becomes even more pronounced on highly rotational symmetric models, where \textsc{P2M}'s preprocessing time grows prohibitively expensive, making it impractical for real-world applications, while our method maintains reasonable preprocessing costs.

Fig.~\ref{fig:preprocess_time_compare_on_ABC_dataset} demonstrates the preprocessing performance across the ABC dataset, which contains diverse CAD geometries with varying complexity and structural characteristics. Our method demonstrates consistent preprocessing speedups over \textsc{P2M} across the majority of models in this diverse collection.

\begin{figure}[htbp]
  \centering
  \includegraphics[width=\linewidth]{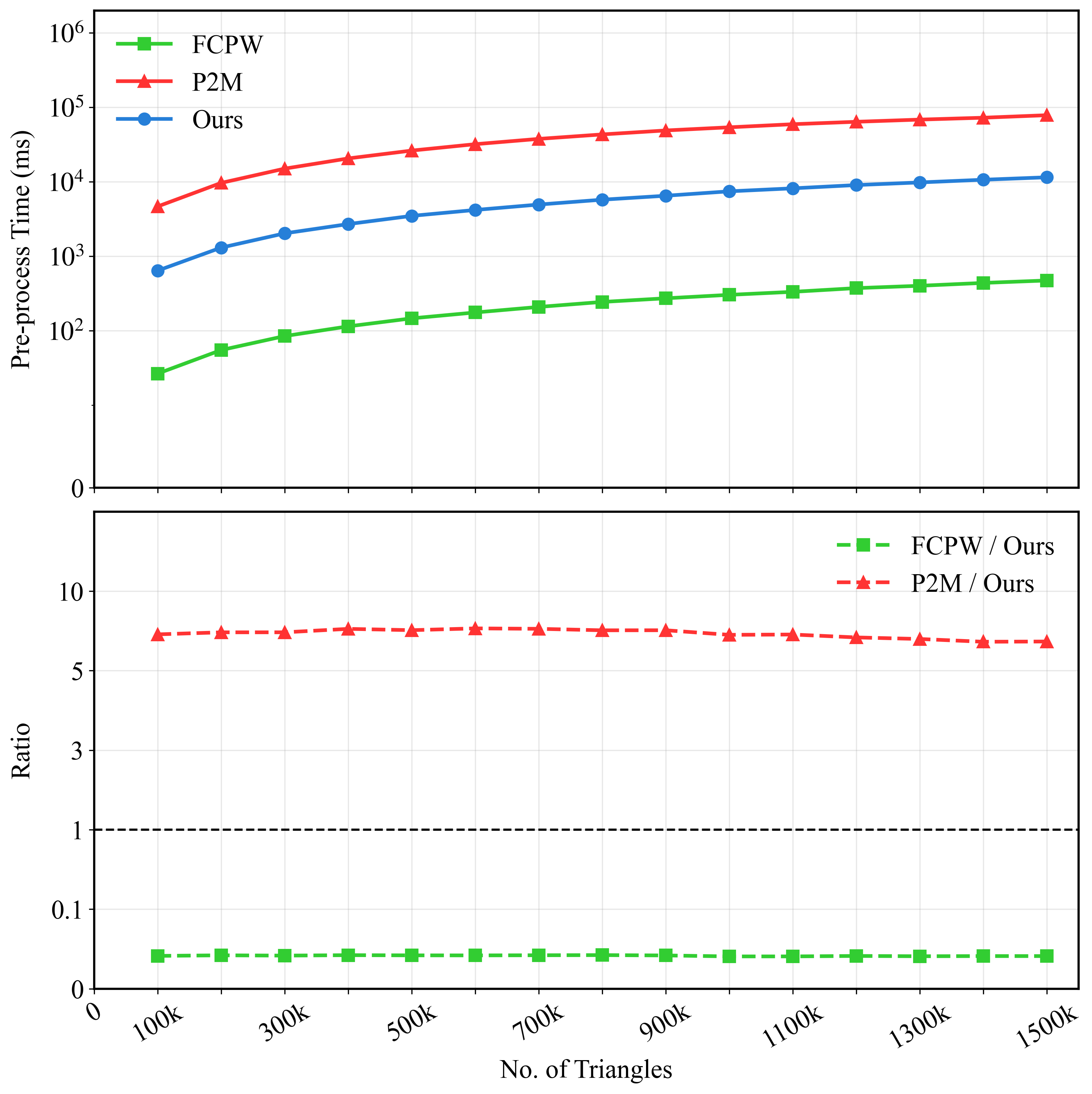}
  \caption{The preprocessing cost on the Dragon model with varying resolutions. Top: our preprocessing cost v.s. the mesh resolution. Bottom: the comparison about the preprocessing cost among P2M, FCPW and ours.}
  \label{fig:preprocess_dragon}
\end{figure}

Fig.~\ref{fig:preprocess_dragon} examines scaling behavior on the Dragon model at multiple resolutions. Our preprocessing time grows gradually with increasing triangle count and maintains a stable performance advantage over \textsc{P2M} across all tested scales, demonstrating the effectiveness of our BVH-accelerated sphere-mesh intersection approach for interception table construction.

\begin{figure}[!h]
  \centering
  \includegraphics[width=\linewidth]{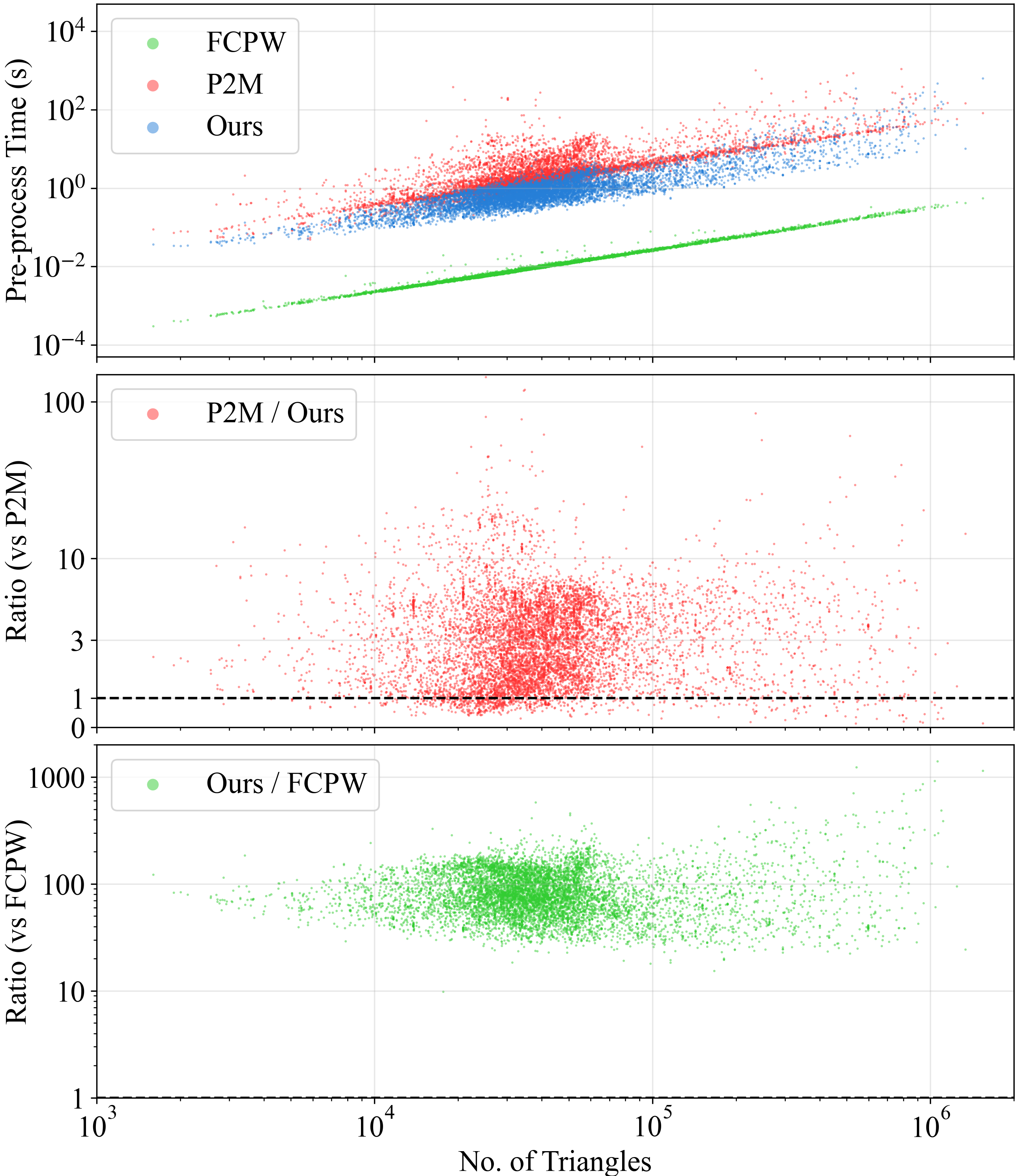}
  \caption{Preprocessing time comparison for \textsc{P2M}, FCPW, and our method on the ABC dataset.}
  \label{fig:preprocess_time_compare_on_ABC_dataset}
  \vspace{-2mm}
\end{figure}

\begin{table}[h]
\caption{Ablation study statistics comparing the method \textbf{without (w/o)} and \textbf{with (w/)} the adaptive site insertion strategy. Statistics include preprocessing time (Pre.), query time, and interception list lengths: maximum (Max), average (Avg), and query-weighted average (Queried).}

\label{tab:ablation_study}
\resizebox{0.99\columnwidth}{!}{%
\setlength{\aboverulesep}{0pt} 
\setlength{\belowrulesep}{0pt} 
\begin{tabular}{l|cc|cc|cc|cc|cc}
\toprule
 & \multicolumn{2}{c|}{Pre. (ms)} & \multicolumn{2}{c|}{Query ($\mu$s)} & \multicolumn{2}{c|}{Max Len.} & \multicolumn{2}{c|}{Avg Len.} & \multicolumn{2}{c}{Queried Len.} \\ 
\cmidrule(lr){2-3} \cmidrule(lr){4-5} \cmidrule(lr){6-7} \cmidrule(lr){8-9} \cmidrule(lr){10-11}
 & w/o & w/ & w/o & w/ & w/o & w/ & w/o & w/ & w/o & w/ \\ \midrule
Sphere & 10064.6 & 791.5 & 15.79 & 0.48 & 20480 & 20480 & 20052.8 & 229.2 & 20070.6 & 25.5 \\
Dragon & 1676.1 & 1739.1 & 0.67 & 0.67 & 598 & 598 & 62.9 & 62.9 & 71.9 & 71.9 \\
\#1 & 2787.8 & 794.5 & 3.08 & 0.39 & 5210 & 43186 & 650.5 & 225.7 & 3335.1 & 49.4 \\ 
\#3 & 454.8 & 351.7 & 2.24 & 0.42 & 2337 & 15902 & 638.9 & 243.9 & 1570.4 & 46.2 \\
\bottomrule
\end{tabular}
}
\vspace{-1mm}
\end{table}

\subsection{Ablation Study}

To verify the effectiveness of our adaptive site insertion strategy, we compare our full method against a configuration excluding this component. Table~\ref{tab:ablation_study} reports the preprocessing time, query time, and key statistics of the interception list across representative models.

The results, summarized in Table~\ref{tab:ablation_study}, highlight the impact of our adaptive strategy. The most significant gains are observed on the \textit{Sphere} model, where our method achieves a 10$\times$ speedup in preprocessing and a 30$\times$ speedup in query performance. This performance leap is driven by a drastic reduction in the search space: the average interception list length drops from $\approx$ 20k to 229.2, and the query-weighted average length plummets from $\approx$ 20k to just 25.5. These quantitative results confirm that our strategy effectively resolves the pathological clustering of Voronoi vertices caused by the sphere's rotational symmetry, which otherwise forces the non-adaptive solver to traverse excessively large cells.

The benefits of our approach extend to irregular models exhibiting local rotational symmetry, such as \textit{\#1} and \textit{\#3}. For these complex geometries, the adaptive strategy yields substantial reductions in both preprocessing overhead and query latency. These efficiency gains are directly attributed to the effective minimization of the query-weighted average length, confirming that our method successfully resolves Voronoi vertex clustering in high-curvature regions. This result underscores the versatility of our solver, demonstrating its ability to optimize spatial partitioning even for arbitrary shapes lacking perfect geometric symmetry.

Finally, as exemplified by the \textit{Dragon} model, the performance metrics remain virtually identical when employing the adaptive strategy. This confirms the robustness of our approach: the mechanism remains inactive on geometries with uniform vertex distribution, ensuring that no unnecessary computational overhead is introduced for general geometries.

\section{Limitations and Future Work}

While our proposed method achieves state-of-the-art query performance and significantly reduces preprocessing costs compared to \textsc{P2M}, it still incurs a higher preprocessing overhead than lightweight approaches like FCPW. In scenarios involving small-scale query batches where this initialization cost cannot be effectively amortized, FCPW may yield a superior total runtime.

In future work, we aim to address these bottlenecks by investigating more efficient geometric predicates for sphere-triangle intersection tests to further accelerate the preprocessing phase. We also plan to explore algorithmic optimizations to enhance both construction and query throughput. Furthermore, we intend to extend our current solver to support mesh-to-mesh distance queries, enabling efficient proximity computations between complex geometric models. Ultimately, these performance gains and functional extensions will extend the applicability of our framework to a broader spectrum of downstream tasks and practical scenarios.

\section{Conclusion}

In this paper, we presented \textsc{P2M++}, a robust and exact point-to-mesh distance query framework built upon an enhanced Voronoi partitioning paradigm. Our approach systematically addresses the fundamental computational bottlenecks of the original \textsc{P2M} algorithm by introducing two primary technical innovations. First, we developed an adaptive site insertion strategy that monitors the density of Voronoi vertices to strategically place auxiliary sites in geometrically unstable regions. This mechanism yields a more balanced spatial decomposition and effectively prevents the quadratic explosion of interception table lengths in rotationally symmetric geometries. Second, we reformulated the interception table construction as a series of sphere-mesh intersection tests centered at Voronoi cell corners. This formulation naturally admits acceleration via Bounding Volume Hierarchies (BVH), significantly streamlining the precomputation phase.

Extensive evaluation across diverse datasets demonstrates that \textsc{P2M++} achieves a $3\times$ to $10\times$ speedup in preprocessing over the original \textsc{P2M} while consistently delivering superior query performance. Notably, our method exhibits strong robustness on rotationally symmetric geometries, such as spheres and cylinders, where prior Voronoi-based solvers often suffer severe performance degradation.

\clearpage

\bibliographystyle{ACM-Reference-Format}
\bibliography{sample-base}

\clearpage

\appendix

\end{document}